\newtheoremstyle{theorem}%name
  {10pt}          % space above
  {10pt}  % space below
  {\sl}  % bofy font
  {\parindent}     % ident - empty=no indent,  \parindent= paragraph indent
  {\bf}  % thm head font
  {. }    % punctuation after thm head
  { }    % space after thm head: `` ``=normal \newline=linebreak
  {}     % thm head specification
\theoremstyle{theorem}
\newtheorem{theorem}{Theorem}
\newtheorem{remark}[theorem]{Remark}
\newtheorem{proposition}[theorem]{Proposition}
\newtheorem{lemma}[theorem]{Lemma}
\newtheoremstyle{defi}%name
  {10pt}          % space above
  {10pt}  % space below
  {\rm}  % bofy font
  {\parindent}     % ident - empty=no indent,  \parindent= paragraph indent
  {\bf}  % thm head font
  {. }    % punctuation after thm head
  { }    % space after thm head: `` ``=normal \newline=linebreak
  {}     % thm head specification
\theoremstyle{defi}
\newtheorem{definition}[theorem]{Definition}
\def\J {\mathbb{J}}
\def\R {\mathbb{R}}
\def\N {\mathbb{N}}
\def\H {{\mathcal H}}
\def\B {{\mathcal B}}
\def\D {{\mathcal D}}
\def\E {{\mathcal E}}
\def\A {{\mathcal A}}
\def\L {{\mathcal L}}
\def\K {{\mathcal K}}
\def\J {{\mathcal J}}
\def\M {{\mathcal M}}
\def \l {\langle}
\def \r {\rangle}
\def \pt {\partial_t}
\def \ptt {\partial_{tt}}
\begin{document}

\title{\textbf{LONGTIME BEHAVIOR FOR OSCILLATIONS OF AN EXTENSIBLE
VISCOELASTIC BEAM WITH ELASTIC EXTERNAL SUPPLY}}

%Longtime behavior for oscillations of an extensible viscoelastic
%beam with elastic external supply}

\author{Bochicchio Ivana$^1$ and Vuk Elena$^2$\\
$^1$Department of Mathematics and Informatics, \\Universit\`a di Salerno\\
Via Ponte don Melillo, 84084 Fisciano (SA), Italy\\
%and INFN Sez. di Napoli, Compl. Univ. di Monte S. Angelo, Edificio G,\\
%Via Cinthia, I-80126, Napoli, Italy\\
ibochicchio@unisa.it\\[2pt]
$^2$ Department of Mathematics, Universit\`a di Brescia\\
Via Valotti 9, 25133 Brescia, Italy\\
vuk@ing.unibs.it}

\maketitle \thispagestyle{empty}

\begin{abstract}
This work is focused on a nonlinear equation describing the
oscillations of an extensible viscoelastic beam with fixed ends,
subject to distributed elastic external force. For a general axial
load $\beta$, the existence of a finite/infinite set of stationary
solutions and buckling occurrence are scrutinized. The exponential
stability of the straight position is discussed. Finally, the
related dynamical system in the history space framework is shown
to possess a regular global attractor.

{\bf AMS Subject Classification:} 35B40, 35B41, 37B25, 74G60,
74H40, 74K10

{\bf Key Words and Phrases:} Extensible viscoelastic beam,
Lyapunov functional, exponential stability, global attractor
\end{abstract}

\section{Introduction}

In this paper we analyze the asymptotic behavior of the following
nonlinear dissipative evolution problem
\begin{equation}
\label{FIRST} \ptt u+\partial_{xxxx}u +\displaystyle\int_0^\infty
\mu(s) \partial_{xxxx}[u(t)-u(t-s)]ds- \big(\beta+\|\partial_x
u\|_{L^2(0,1)}^2\big)\partial_{xx}u= -ku + f,
\end{equation}
in the unknown variable $u=u(x,t):[0,1]\times\R\to\R$, which
describes the vibrations of an extensible viscoelastic beam of
unitary natural length. The real function $f=f(x)$ is the lateral
(static) load distribution, the term $-ku$ represents the lateral
action effected by the elastic foundation and the real parameter
$\beta$ denotes the axial force acting in the reference
configuration (positive in stretching, negative in compression).
The {\it memory kernel} $\mu$ is a nonnegative absolutely
continuous function on $\R^+=(0,\infty)$ (hence, differentiable
almost everywhere) such that
\begin{equation}
\label{MU} \mu'(s)+\delta\mu(s)\leq 0,\qquad
\int_0^\infty\mu(s)ds=\kappa,
\end{equation}
for some $\delta>0$ and $\kappa>0$. In particular, $\mu$ has an
exponential decay of rate $\delta$ at infinity.
\newline
Precisely, the model is obtained by combining the pioneering ideas
of Woinowsky-Krieger \cite{W} with the theory of Drozdov and
Kolmanovskii \cite{DK}, i.e. taking into account the geometric
nonlinearity due to the deflection, which produces an elongation
of the bar, and the energy loss due to the internal dissipation of
the material, which translates into a linear viscoelastic response
in bending.

In our analysis, we assume that both ends of the beam are hinged;
namely, for every $t\in\R$
\begin{equation}
\label{FIRSTBC}
u(0,t)=u(1,t)=\partial_{xx}u(0,t)=\partial_{xx}u(1,t)=0.
\end{equation}
Moreover, because of integro-differential nature of \eqref{FIRST},
the {\it past history} of $u$ (which need not fulfill the equation
for negative times) is assumed to be known. Hence, the initial
condition reads
\begin{equation}
\label{FIRSTIC} u(x,t)=u_\star(x,t),\qquad
(x,t)\in[0,1]\times(-\infty,0],
\end{equation}
where $u_\star:[0,1]\times (-\infty,0]\to\R$ is a given function.

In order to apply the theory of strongly continuous semigroups, we
recast the original problem as a differential system in the {\it
history space framework}. To this end, following
Dafermos~\cite{DAF}, we introduce the {\it relative displacement
history}
$$
\eta^t(x,s) = u(x,t)-u(x,t-s),
$$
so that equation (\ref{FIRST}) turns into
\begin{equation}
\label{BEAM}
\begin{cases}
\ptt u+\partial_{xxxx}u+\displaystyle\int_0^\infty \mu(s)
\partial_{xxxx}\eta(s)ds-
\big(\beta+\|\partial_x u\|_{L^2(0,1)}^2\big)\partial_{xx}u=  - ku + f,\\
\pt \eta=-\partial_s\eta+\pt u.
\end{cases}
\end{equation}
Accordingly, the initial condition \eqref{FIRSTIC} becomes
\begin{equation}
\label{BEAMIC}
\begin{cases}
u(x,0)=u_0(x), &  x\in[0,1],\\
\pt u(x,0)={u}_1(x), & x\in[0,1],\\
\eta^0(x,s)=\eta_0(x,s), & (x,s)\in [0,1]\times\R^+,
\end{cases}
\end{equation}
where we set
$$u_0(x)=u_\star(x,0),\qquad
u_1(x)=\pt u_\star(x,0),\qquad
\eta_0(x,s)=u_\star(x,0)-u_\star(x,-s).
$$
As far as the boundary conditions are concerned, \eqref{FIRSTBC},
for every $t\geq 0$, translates into
\begin{equation}
\label{BEAMBC}
\begin{cases}
u(0,t)=u(1,t)=\partial_{xx}u(0,t)=\partial_{xx}u(1,t)=0,\\
\eta^t(0,s)=\eta^t(1,s)=\partial_{xx}\eta^t(0,s)=\partial_{xx}\eta^t(1,s)=0,\\
\eta^t(x,0)= \displaystyle\lim_{s\to 0}{\eta}^t(x,s) = 0.
\end{cases}
\end{equation}

It is worth noting that the static counterpart of
problem~\eqref{FIRST} reduces to
\begin{eqnarray}
\label{STATIC1}
\begin{cases}
 u{''''}
-\left(\beta+\|\partial_x u\|_{L^2(0,1)}^2\right)u{''}+ku=f,\\
u(0)=u(1)=u{''}(0)=u{''}(1)=0.
\end{cases}
\end{eqnarray}
When $k\equiv0$ the investigation of the solutions to
\eqref{STATIC1} and their stability, in dependence on $\beta$,
represents a classical {\it nonlinear buckling problem}
%in the structural mechanics literature
(see, for instance, \cite{B1,DIC,RM}). Its numerical solutions
are available in  \cite{CC}, while their connection with
industrial applications is discussed in \cite{FW}. Recently, a
careful analysis of the corresponding buckled stationary states
was performed in \cite{NE} for all values of $\beta$. In
\cite{CZGP} this analysis was improved to include a source $f$
with a general shape.

For every $k>0$ and vanishing sources, exact solutions to
\eqref{STATIC1} can be found in \cite{BOV}, and at a first sight,
this case looks like a slight modification of previously
scrutinized models where $k$ vanishes. This is partially true.
Indeed, the restoring elastic force, acting on each point of the
beam, opposes the buckling phenomenon and increases the critical
Euler buckling value $\beta_c$, which is no longer equal to
$\sqrt{\lambda_1}$, the root square of the first eingenvalue of
the $\partial_{xxxx}$ operator, but turns out to be a {\it
piecewise} linear function of $k$. When the lateral load $f$
vanishes, the null solution is unique provided that
$\beta\geq-\beta_c(k)$ (see Theorem \ref{TH-stat.solut}), and
buckles when $\beta$ exceeds this critical value, as well as in
the case $k=0$. On the contrary, for some special positive values
of $k$, called {\it resonant values}, infinitely many solutions
may occur.

Moreover, in the case $k\equiv0$, if $f$ vanishes, the exponential
decay of the energy is provided when $\beta>-\lambda_1$, so that
the unique null solution is exponentially stable. On the contrary,
as the axial load $\beta\leq-\lambda_1$ the straight position
loses stability and the beam buckles. So, a finite number of
buckled solutions occurs and the global attractor coincides with
the unstable trajectories connecting them.

 By paralleling the results for $k=0$, the null solution is expected to be exponentially stable, when it is unique. Quite surprisingly, it is not so. For large values of $k$, the energy decays with a sub-exponential rate when $-\bar\beta>\beta>-\beta_c$ (see Theorem \ref{exp-stab}). In particular, for any fixed $k>\lambda_1$, the positive limiting value $\bar\beta(k)$ is smaller than the critical value $\beta_c(k)$, and the first overlaps the latter only if $0\leq k\leq \lambda_1$.

The motion equation (\ref{FIRST}) with $\mu=k=f=0$ turns out to be
conservative and has been considered for hinged ends in
\cite{B1,DI1}, with particular reference to well-posedness
results. Adding an external viscous damping term $\delta\pt u$
($\delta>0$) to this conservative model, stability properties of
the unbuckled (trivial) states have been established in
\cite{B,DIC} and, more formally, in \cite{RM}. In this case, the
global dynamics of solutions for a general $\beta$ has been
tackled first in \cite{HAL}, where some regularity of the
attractor is obtained provided that $\delta$ is large enough. When
the stiffness of the surrounding medium is neglected ($k=0$), the
existence of a regular attractor was proved for extensible
Kirchhoff beam \cite{CZ}, extensible viscoelastic \cite{GPV} and
thermo-elastic \cite{GNPP} beams. A similar result for an
extensible elastic beam resting on a viscoelastic foundation was
obtained in \cite {BOV}. This strategy can be generalized to the
investigation of nonlinear dissipative models which describe the
vibrations of extensible thermoelastic beams \cite{BGV}, and to
the analysis of the long term damped dynamics of extensible
elastic bridges suspended by flexible and elastic cables
\cite{BGV1}. In the last case the term $-ku$ is replaced by
$-ku^+$ and it represents a restoring force due to the cables.
Moreover, our approach may be adapted to the study of simply
supported bridges subjected to moving vertical load
\cite{venuti1}.

The final result of this work concern the existence of a regular
global attractor for all values of the real parameter $\beta$. The
main difficulty comes from the very weak dissipation exhibited by
the model, entirely contributed by the memory term. So, the
existence of the global attractor is stated through the existence
of a Lyapunov functional and the asymptotic smoothing property of
the semigroup generated by the abstract problem via a suitable
decomposition first proposed in \cite{GPV}.

 %%%%%%%%%%%%%%%%%%%%%%%%%%%%%%%%%%%%%%%%%%%%%%%%%%%%%
\section{The Abstract Setting}

\noindent

In this section we will consider an abstract version of
problem~\eqref{BEAM}-\eqref{BEAMBC}. To this aim, let
$(H_0,\l\cdot,\cdot\r,\|\cdot\|)$ be a real Hilbert space, and let
$A:\D(A)\Subset H_0\to H_0$ be a strictly positive selfadjoint
operator. For $\ell\in\R$, we introduce the scale of Hilbert
spaces
$$
H_\ell=\D(A^{\ell/4}),\qquad \l u,v\r_\ell=\l
A^{\ell/4}u,A^{\ell/4}v\r,\qquad \|u\|_\ell=\|A^{\ell/4}u\|.
$$
In particular, $H_{\ell+1}\Subset H_\ell$ and the generalized
Poincar\'e inequalities hold
\begin{equation}\label{POINCARE}
\sqrt{\lambda_1}\,\|u\|_\ell^2\leq \|u\|_{\ell+1}^2, \quad \forall
u \in H_{\ell+1},
\end{equation}
where $\lambda_1>0$ is the first eigenvalue of $A$.

Given $\mu$ satisfying \eqref{MU}, we consider the $L^2$-weighted
spaces
$$
\M_\ell=L^2_\mu(\R^+,H_{\ell+2}), \,\,\, \l \eta,\xi\r_{\ell,\mu}=
\int_0^\infty\mu(s)\l\eta(s),\xi(s)\r_{\ell+2} ds, \,\,\,
\|\eta\|_{\ell,\mu}^2=\l \eta,\eta\r_{\ell,\mu}
$$
along with the infinitesimal generator of the right-translation
semigroup on $\M_0$, that is, the linear operator
$$T\eta=-D\eta,\qquad \D(T)=\{\eta\in{\M_0}:D\eta\in\M_0,\,\,\eta(0)=0\},$$
where $D$ stands for the distributional derivative, and
$\eta(0)=\lim_{s\to 0}\eta(s)$ in $H_2$.

Besides, we denote by $\M^1_\ell$ the weighted Sobolev spaces
$$\M^1_\ell=H^1_\mu(\R^+,H_{\ell+2})=\{\eta\in{\M_\ell}:D\eta\in\M_\ell\}, \qquad  \|\eta\|_{\M^1_\ell}^2=\|\eta\|_{\ell,\mu}^2
+\|D\eta\|_{\ell,\mu}^2.$$ Moreover, the functional
$$\J(\eta)=-\int_0^\infty\mu'(s)\|\eta(s)\|_{2}^2ds
$$
is finite provided that $\eta\in\D(T)$. From the assumption
\eqref{MU} on $\mu$,
\begin{equation}
\label{J} \|\eta\|_{0,\mu}^2\leq\frac{1}{\delta}\J(\eta).
\end{equation}
Finally, we define the product Hilbert spaces
$$\H_\ell=H_{\ell+2}\times H_\ell\times\M_{\ell}.
$$
For $\beta\in\R$ and $f\in H_{0}$, we investigate the evolution
system on $\H_0$ in the unknowns $u(t):[0,\infty)\to H_2$, $\pt
u(t):[0,\infty)\to H_0$ and $\eta^t:[0,\infty)\to\M_0$
\begin{equation}
\label{BASE}
\begin{cases}
\ptt u+Au+\displaystyle\int_0^\infty \mu(s) A\eta(s)ds+
\big(\beta+\|u\|^2_1\big)A^{1/2}u= -ku + f,\\
\pt \eta=T\eta+\pt u,
\end{cases}
\end{equation}
with initial conditions
$$
(u(0),u_t(0),\eta^0)=(u_0,u_1,\eta_0)=z\in\H_0.
$$

\begin{remark}
Problem~\eqref{BEAM}-\eqref{BEAMBC} is just a particular case of
the abstract system \eqref{BASE}, obtained by setting
$H_0=L^2(0,1)$ and
$$A=\partial_{xxxx}, \qquad \D(\partial_{xxxx})=\{w\in H^4(0,1) : w(0)=w(1)=w''(0)=w''(1)=0\}.$$
\end{remark}
\noindent This operator is strictly positive selfadjoint with
compact inverse, and its discrete spectrum is given by
$\lambda_n=n^4\pi^4$, $n\in\N$. Thus, ${\lambda_1=\pi^4}$ is the
smallest eigenvalue. Besides, the peculiar relation $
(\partial_{xxxx})^{1/2}=-\partial_{xx} $ holds true, with
$\D(-\partial_{xx})=H^2(0,1)\cap H^1_0(0,1).$

Besides, system~\eqref{BASE} generates a strongly continuous
semigroup (or dynamical system) $S(t)$ on $\H_0$ which
continuously depends on the initial data: for any initial data
$z\in\H_0$, $S(t)z$ is the unique weak solution to \eqref{BASE},
with related (twice the) energy given by
$$\E(t)=\|S(t)z\|^2_{\H_0}=\|u(t)\|^2_2+\|\pt u(t)\|^2+\|\eta^t\|^2_{0,\mu}.$$
We omit the proof of these facts, which can be demonstrated either
by means of a Galerkin procedure or with a standard fixed point
method. In both cases, it is crucial to have uniform energy
estimates on any finite time-interval.

%%%%%%%%%%%%%%%%%%%%%%%%%%%%%%%%%%%%%%%%%%%%%%%%%%%%%

\section {Steady states}

\noindent In the concrete problem \eqref{BEAM}-\eqref{BEAMBC},
taking for simplicity $f=0$, the stationary solutions $(u,0,0)$
solve the boundary value problem
\begin{equation}
\begin{cases}
\label{STATIC}
u''''-\big(\beta+\|u'\|^2_{L^2(0,1)}\big)u'' + ku= 0,\\
\noalign{\vskip1mm} u(0)=u(1)=u''(0)=u''(1)=0.
\end{cases}
\end{equation}
Our aim is to analyze the multiplicity of such solutions.

For every $k>0$,  let
$$
\mu_n(k)=\frac{k}{n^2\pi^2}+ n^2\,\pi^2\,,\qquad \beta_c(k)=
\min_{n\in\N}\mu_n(k).
$$
Assuming that $n_k\in\N$ be such that
$\displaystyle\mu_{n_k}=\min_{n\in\N}\mu_n(k)$, then it satisfies
$$
(n_k-1)^2n_k^2\leq \frac k {\pi^4}< n_k^2(n_k+1)^2\,.
$$
As a consequence, $\beta_c(k)$ is a piecewise-linear function of
$k$.

We consider the {\it resonant set}
$${\mathcal R}=\{ i^2j^2\pi^4:i,j\in\N, i<j\}.$$
When $k\in{\mathcal R}$ there exists at least a value   $\mu_j(k)$
which is not simple and
 $\mu_i=\mu_j$, $i\neq j$, provided that $k= i^2j^2\pi^4$ (resonant values). In the sequel, let $\mu_m(k)$ be the smallest value of $\{\mu_n\}_{n\in\N}$ which is not simple. Of course, the $\mu_n(k)$ are all simple and increasingly ordered with $n$ whenever $k<4\pi^4$.  Given $k>0$, let $n_\star$ be the integer-valued function given by
 \begin{equation}\nonumber
n_{\star }(\beta)=|{\mathcal N}_\beta|\,, \qquad{\mathcal
N}_\beta= \{n\,\in \mathbb{N}:\beta +\mu_n(k)<0\},
\end{equation}
where $|{\mathcal N}|$ stands for the cardinality of the set
${\mathcal N}$.

In the homogeneous case, we are able to establish the exact number
of stationary solutions and their explicit form. In particular, we
show that there is always at least one solution, and at most a
finite number of solutions, whenever the values of $\mu_n(k)$ not
exceeding $-\beta$ are simple.

\begin{theorem}{\rm{(see \cite{BOV})}}
\label{TH-stat.solut}
 If $\beta \geq - \beta_{c}(k)$, then for every $k>0$ system \eqref{STATIC} has only the null solution, depicting the straight equilibrium position. Otherwise:
\begin{itemize}
\item if $k\in{\mathcal R}$ and $\beta < -\mu_m(k)$, the smallest
non simple eigenvalue, there are infinitely many solutions; \item
if either $k\in{\mathcal R}$ and $-\mu_m(k)\leq \beta < -
\beta_{c}(k)$, or $k\not\in{\mathcal R}$ and $\beta < -
\beta_{c}(k)$, then besides the null solution there are also
$2n_\star (\beta)$ buckled solutions, namely
\begin{equation}
\label{soluz.tutte} u_{n}^{\pm }(x)\,=\,A_{n}^{\pm }\,\sin (n\pi
x)\,,\quad\,n=1,2,\ldots ,n_{\star }
\end{equation}
with
\begin{equation}
\label{staticresponse} A_{n}^{\pm }=\pm \,\frac{1}{n\,\pi
}\sqrt{-\,2\,\left[ \beta + \mu_n(k)\right]}.
\end{equation}
\end{itemize}
\end{theorem}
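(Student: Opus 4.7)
The plan is to exploit the fact that the nonlinearity in \eqref{STATIC} involves $u$ only through the scalar $\alpha := \|u'\|_{L^2(0,1)}^2$, so the equation becomes linear once $\alpha$ is treated as an unknown constant, and then to read the solution set off the Fourier sine expansion of $u$.

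Setting $\alpha\ge 0$, \eqref{STATIC} rewrites as $u''''-(\beta+\alpha)u''+ku=0$ with hinged data. Expanding $u(x)=\sum_{n\ge 1}c_n\sin(n\pi x)$ in the sine basis, which simultaneously diagonalizes $\partial_{xxxx}$ and $-\partial_{xx}$ with eigenvalues $n^4\pi^4$ and $n^2\pi^2$, the equation becomes
\begin{equation*}
\sum_{n\ge 1}c_n\,n^2\pi^2\bigl[\mu_n(k)+\beta+\alpha\bigr]\sin(n\pi x)=0,
\end{equation*}
so for each $n$ either $c_n=0$ or $\alpha=-(\beta+\mu_n(k))$. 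Because $\alpha$ is a single real number, every active index must give the same value $\mu_n(k)$; combining this with the Parseval identity $\alpha=\tfrac12\sum_n c_n^2 n^2\pi^2$ yields the compatibility relation that selects the non-vanishing coefficients.

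The three alternatives then follow by case analysis. If $\beta\ge -\beta_c(k)$ then $\beta+\mu_n(k)\ge 0$ for every $n$, forcing $\alpha\le 0$, hence $\alpha=0$ and $u\equiv 0$. If $\beta<-\beta_c(k)$ while every $\mu_n(k)\le -\beta$ is simple---either because $k\notin\mathcal{R}$, or because $k\in\mathcal{R}$ but $-\mu_m(k)\le\beta$, so that each active $\mu_n(k)$ lies strictly below $\mu_m(k)$ and is simple by the very definition of $\mu_m$---at most one Fourier mode is non-zero; for each $n\in\mathcal{N}_\beta$ the compatibility relation $c_n^2 n^2\pi^2/2=-(\beta+\mu_n(k))$ forces $c_n=A_n^{\pm}$, giving the $2n_\star(\beta)$ buckled solutions besides the null one. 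Finally, if $k\in\mathcal{R}$ and $\beta<-\mu_m(k)$ there exist $i\ne j$ with $\mu_i(k)=\mu_j(k)=\mu_m(k)$, and every pair $(c_i,c_j)$ on the ellipse $c_i^2 i^2\pi^2+c_j^2 j^2\pi^2=-2(\beta+\mu_m(k))>0$ generates a solution of \eqref{STATIC}, so infinitely many configurations occur.

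The routine analytic work is absorbed into the sine expansion; the main obstacle is keeping the thresholds at $-\beta_c(k)$ and $-\mu_m(k)$ sharp and, crucially, ruling out mixed-mode buckled states whenever the relevant eigenvalues are simple. That exclusion is the heart of the statement: a single scalar $\alpha=\|u'\|^2$ can match at most one value in $\{\mu_n(k)\}$, so genuine mode mixing occurs only at resonances $k\in\mathcal{R}$ with $\beta<-\mu_m(k)$, which is precisely the border between finitely and infinitely many stationary configurations.
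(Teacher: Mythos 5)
Your argument is correct and is essentially the standard one behind this theorem (the paper itself only cites \cite{BOV} for the proof): freeze the scalar $\alpha=\|u'\|^2$, diagonalize the resulting linear operator in the sine basis so that each active mode forces $\alpha=-(\beta+\mu_n(k))$, and close the loop with the Parseval identity $\alpha=\tfrac12\sum_n c_n^2n^2\pi^2$; the case analysis on simplicity of the $\mu_n(k)$ below $-\beta$ then yields exactly the trichotomy stated. All thresholds are handled sharply, including the key exclusion of mixed-mode states off the resonant set.
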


\begin{remark}
Assuming $k=0$ we recover the results of \cite{CZGP}.
\end{remark}

\noindent When  $k\not\in{\mathcal R}$ the set of all stationary
states is finite. Depending on the value of $k$ and $\beta$, the
solutions branch in the pairs from the unbuckled state
$A^\pm_n\,=\,0$ at the critical value $\beta =  - \beta_{c}(k)$,
i.e., the beam can buckle in either the positive or negative
directions of the transverse displacement. These branches exist
for all $\beta <  - \beta_{c}(k)$ and $A^\pm_n$ are monotone
increasing functions of $|\beta|$. For each $n$,
\eqref{staticresponse} admits real (buckled) solutions $A_{n}^{\pm
}$ if and only if $\beta < -\mu_n$. When $k<4\pi^4$, for any
$\beta$ in the interval
$$ - \frac{k}{(n+1)^2\,\pi^2}-(n+1)^2\,\pi^2<\beta < - \frac{k}{n^2\,\pi^2}-n^2\,\pi^2$$
the set ${\mathcal S_0}$ of the stationary solutions contains
exactly $2n_{\star}+1$ stationary points: the null solution and
the solutions represented by \eqref{soluz.tutte}.

When  $k\in{\mathcal R}$ the set ${\mathcal S_0}$ contains an
infinite numbers of solutions and all the resonant values are
obtained by solving
$$
n^2\pi^2 + \frac{k}{n^2\pi^2}= m^2\pi^2 +
\frac{k}{m^2\pi^2}\,,\qquad m,n\in\N\,, \ n>m\,.
$$
The smallest value $k$ is then equal to $4\pi^4$ and occurs when
$n=2$, $m=1$. In the sequel we present a sketch of different
bifurcation pictures occurring when $k=\pi^4$ (see Fig.1), and
$k=9\pi^4$ (see Fig.2).

\medskip
\begin{center}
\includegraphics[width=12cm, height=6cm]{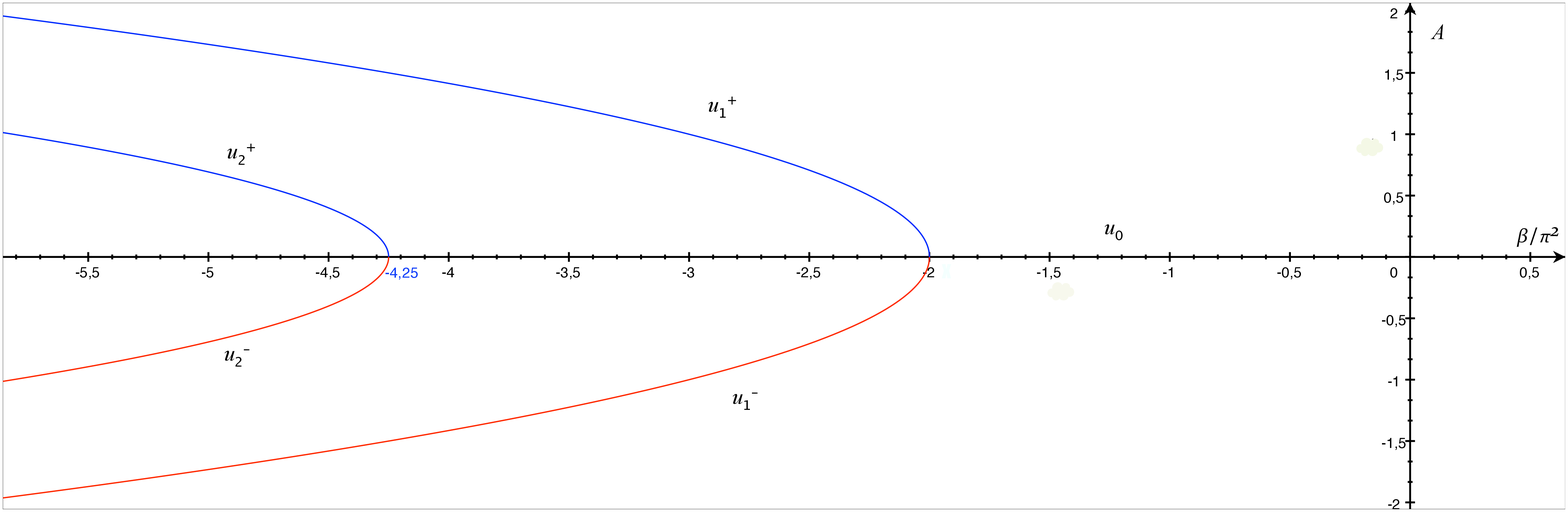}
\end{center}
 \begin{center}
 {\small {Fig.1. The bifurcation picture when $k=\pi^4$}}
\medskip
  \end{center}

\medskip
\begin{center}
\includegraphics[width=12cm, height=6cm]{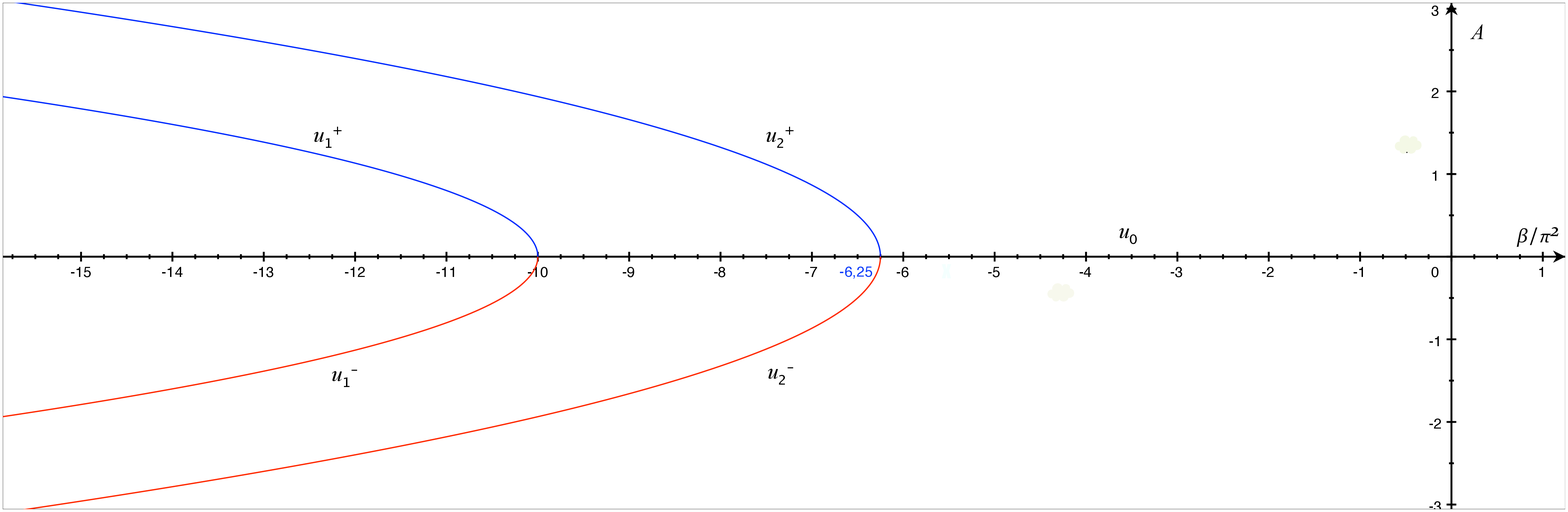}

\end{center}
 \begin{center}
 {\small {Fig.2. The bifurcation picture when $k=9\pi^4$}}
 \medskip
  \end{center}

%%%%%%%%%%%%%%%%%%%%%%%%%%%%%%%%%%%%%%%%%%%%%%%%%%%%%

\section {The Lyapunov functional}

\noindent It is well known that the absorbing set gives a first
rough estimate of the dissipativity of the system. In addition, it
is the preliminary step to scrutinize its asymptotic dynamics and
hence to prove the existence of a global attractor. Unfortunately,
when the dissipation is very weak, a direct proof via explicit
energy estimates might be very hard to find. For a quite general
class of the so-called {\it gradient systems} it is possible to
use an alternative approach appealing to the existence of a
Lyapunov functional. This technique has been adopted in
\cite{GPV}.
\begin {definition}
The Lyapunov functional is a function $\L\in C(\H_0,\R)$
satisfying the following conditions:
\begin{itemize}
\item[(i)] $\L(z)\to+\infty$ if and only if
$\|z\|_{\H_0}\to+\infty$; \item[(ii)] $\L(S(t)z)$ is nonincreasing
for any $z\in\H_0$; \item[(iii)] $\L(S(t)z)=\L(z)$ for all $t>0$
implies that $z\in{\mathcal S}$.
\end{itemize}
\end {definition}

\begin{proposition}
\label{LYAP} The function
$$\L(t)\,=\,\E(t)+\frac{1}{2}\left( \beta +\left\| u(t)\right\|
_{1}^{2}\right) ^{2}+k\,\left\| u(t)\right\| ^{2}-2\left\langle
f,\,u(t)\right\rangle\,
$$ is a Lyapunov functional for $S(t)$.
\end{proposition}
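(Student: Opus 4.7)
The plan is to verify the three defining conditions for $\L$. Continuity of $\L:\H_0\to\R$ follows at once from the continuous embedding $H_2\hookrightarrow H_1\hookrightarrow H_0$, the generalized Poincar\'e inequalities \eqref{POINCARE}, and Cauchy-Schwarz for the source term $\langle f,u\rangle$. For the coercivity condition (i), the upper bound $\L(z)\leq C(1+\|z\|_{\H_0}^2+\|z\|_{\H_0}^4)$ is immediate from the definition of the norm on $\H_0$ combined with $\|u\|_1^2\leq \lambda_1^{-1/2}\|u\|_2^2$. For the lower bound, I would expand $\frac{1}{2}(\beta+\|u\|_1^2)^2$ and use Young's inequality to dominate the possibly negative cross-term $\beta\|u\|_1^2$ by $\tfrac{1}{4}\|u\|_1^4+C(\beta)$, and to absorb $-2\langle f,u\rangle$ by $\|u\|^2+\|f\|^2$. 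This yields $\L(z)\geq c\|z\|_{\H_0}^2-C$ and the two-sided control gives (i).

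For the decreasing property (ii), I would compute $\frac{d}{dt}\L$ along a smooth solution by testing the first equation of \eqref{BASE} with $\pt u$. The terms $2\langle Au,\pt u\rangle$, $2(\beta+\|u\|_1^2)\langle A^{1/2}u,\pt u\rangle$, $2k\langle u,\pt u\rangle$, and $-2\langle f,\pt u\rangle$ are recognised as exact time derivatives of $\|u\|_2^2$, $\tfrac{1}{2}(\beta+\|u\|_1^2)^2$, $k\|u\|^2$, and $-2\langle f,u\rangle$, respectively. The memory pairing $2\langle \int_0^\infty\mu(s)A\eta(s)ds,\pt u\rangle$ coincides, by selfadjointness of $A$ and Fubini, with $2\langle \pt u,\eta\rangle_{0,\mu}$, and is exactly cancelled by the analogous contribution obtained when one computes $\frac{d}{dt}\|\eta\|_{0,\mu}^2$ through the second equation $\pt\eta=T\eta+\pt u$. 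What is left over is $2\langle T\eta,\eta\rangle_{0,\mu}$, which, after integrating by parts in $s$ (using $\eta(0)=0$ and the exponential decay of $\mu$ granted by \eqref{MU}), equals $-\J(\eta)$. Hence
$$\frac{d}{dt}\L(S(t)z)=-\J(\eta^t)\leq 0,$$
and this identity extends to weak solutions by a density/Galerkin argument supported by the uniform energy estimates on finite time intervals.

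For condition (iii), if $\L(S(t)z)\equiv \L(z)$ the identity above forces $\J(\eta^t)\equiv 0$; by \eqref{J} this implies $\eta^t\equiv 0$ in $\M_0$ for every $t\geq 0$. Plugging $\eta^t\equiv 0$ into the second equation of \eqref{BASE} yields $\pt u\equiv 0$, whence $\ptt u\equiv 0$, and the first equation then reduces to the stationary problem \eqref{STATIC}, so $z=(u,0,0)\in\mathcal{S}$. The main obstacle in this plan is not algebraic but regularity-theoretic: one must justify the formal differentiation of $\L$ along trajectories and the $s$-integration by parts on the memory term for genuinely weak solutions. This is standard in the Dafermos framework and is handled by repeating the computation on regular Galerkin approximations and passing to the limit, which is why the energy equality is ultimately recovered.
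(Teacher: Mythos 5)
Your proposal is correct and follows essentially the same route as the paper: the paper simply declares continuity and (i) ``clear,'' states the dissipation inequality $\frac{d}{dt}\L(S(t)z)\leq-\delta\|\eta^t\|^2_{0,\mu}$ (which your identity $\frac{d}{dt}\L=-\J(\eta^t)$ implies via \eqref{J}), and deduces (iii) from $\eta^t\equiv 0$ forcing $u$ constant, exactly as you do. Your write-up merely supplies the details the paper omits.
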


\begin{proof}
The continuity of $\L$ and assertion (i) above are clear. Using
\eqref{BASE}, we obtain quite directly the inequality
\begin{equation}\label{LYAP_IN}
\frac{d}{dt}\L(S(t)z)\leq-\delta\|\eta^t\|^2_{0,\mu},
\end{equation}
which proves the monotonicity of $\L$ along the trajectories
departing from $z$. Finally, if $\L(S(t)z)$ is constant in time,
we have that $\eta^t=0$ for all $t$, which implies that $u(t)$ is
constant. Hence, $z=S(t)z=(u_0,0,0)$ for all $t$, that is,
$z\in{\mathcal S}$.
\end{proof}

The existence of a Lyapunov functional ensures that

\begin {lemma}
\label {lemma nuovo} For all $t>0$ and initial data $z\in \H_0$,
with $\|z\|_{\H_0}\leq R$, there exists a positive constant $C$
(depending on $\left\| f\right\|$ and $R$) such that
\label{energy}
$$\E(t)\leq C.$$
\end{lemma}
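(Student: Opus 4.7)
The plan is to exploit the Lyapunov functional from Proposition \ref{LYAP}, which by (ii) satisfies $\L(S(t)z)\leq \L(z)$ for every $t>0$. Hence a uniform estimate on $\E(t)$ reduces to (a) bounding $\L(z)$ from above in terms of $R$ and $\|f\|$ when $\|z\|_{\H_0}\leq R$, and (b) bounding $\E(t)$ from above in terms of $\L(S(t)z)$. Step (a) is an immediate consequence of the generalized Poincaré inequalities \eqref{POINCARE}: writing $z=(u_0,u_1,\eta_0)$, one has $\|u_0\|_1^2\leq \lambda_1^{-1/2}\|u_0\|_2^2\leq \lambda_1^{-1/2}R^2$ and $\|u_0\|^2\leq \lambda_1^{-1}\|u_0\|_2^2\leq \lambda_1^{-1}R^2$, so that
\[
\tfrac12(\beta+\|u_0\|_1^2)^2\leq \beta^2+\lambda_1^{-1}R^4,\qquad k\|u_0\|^2\leq k\lambda_1^{-1}R^2,
\]
and $|2\langle f,u_0\rangle|\leq \|f\|^2+\lambda_1^{-1}R^2$ by Cauchy--Schwarz and Young, while $\E(0)\leq R^2$. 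Summing yields $\L(z)\leq C_0(R,\|f\|)$.

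For step (b) I rewrite the definition of $\L$ as
\[
\E(t)=\L(S(t)z)-\tfrac12\bigl(\beta+\|u(t)\|_1^2\bigr)^2-k\|u(t)\|^2+2\langle f,u(t)\rangle.
\]
The first quadratic correction is nonpositive, so it can be discarded; the only sign-indefinite contribution is $2\langle f,u(t)\rangle$. Here the key observation is that $k>0$ is genuinely used: by Young's inequality
\[
2\langle f,u(t)\rangle\leq 2\|f\|\,\|u(t)\|\leq k\|u(t)\|^2+\tfrac{1}{k}\|f\|^2,
\]
so the unwanted $k\|u(t)\|^2$ is exactly absorbed by the $-k\|u(t)\|^2$ already present, leaving
\[
\E(t)\leq \L(S(t)z)+\tfrac{1}{k}\|f\|^2.
\]

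Combining the two steps with the monotonicity $\L(S(t)z)\leq \L(z)\leq C_0(R,\|f\|)$ gives
\[
\E(t)\leq C_0(R,\|f\|)+\tfrac{1}{k}\|f\|^2=:C,
\]
which is the desired uniform bound. The only delicate point is the handling of the indefinite source term $2\langle f,u(t)\rangle$; if one insisted on working without the elastic supply $k$ one would be forced to re-absorb it into $\|u(t)\|_2^2\subset\E(t)$ via Poincaré and a small parameter, but in the present setting ($k>0$) the term $k\|u(t)\|^2$ built into $\L$ makes the estimate straightforward, so no real obstacle arises.
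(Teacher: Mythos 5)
Your proof is correct and follows the same overall strategy as the paper: monotonicity of the Lyapunov functional reduces everything to (a) bounding $\L(z)$ above by a constant $C_0(R,\|f\|)$ and (b) bounding $\E(t)$ above by $\L(S(t)z)$ plus a constant. The only real divergence is in how the sign-indefinite term $2\langle f,u(t)\rangle$ is absorbed in step (b). You feed it into the $k\|u(t)\|^2$ term already present in $\L$ via Young's inequality with parameter $k$, which is clean but genuinely requires $k>0$; the paper instead discards both nonnegative correction terms and absorbs $\varepsilon\|u(t)\|^2$ into $\E(t)$ through the Poincar\'e inequality $\|u\|^2\leq \lambda_1^{-1}\|u\|_2^2\leq\lambda_1^{-1}\E(t)$, choosing $\varepsilon<\lambda_1$ so that a positive factor $(1-\varepsilon/\lambda_1)$ survives in front of $\E(t)$. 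The paper's variant therefore works uniformly for $k\geq 0$ --- which matters, since the attractor result of Section 6 is asserted for all $k\geq 0$ and the case $k=0$ is explicitly treated elsewhere in the paper --- whereas your constant $\|f\|^2/k$ blows up as $k\to 0^{+}$. You flag this yourself and sketch the Poincar\'e fallback, which is precisely the paper's argument, so there is no gap; just note that if the lemma is meant to cover $k=0$ the fallback should be the main argument rather than a remark.
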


\begin{proof}
Inequality \eqref{LYAP_IN} ensures that
\begin{equation} \nonumber
\begin{split}
\L(t) \leq \L(0)\leq C_0(R, \|f\|) \ .
\end{split}
\end{equation}
Moreover, taking into account that
$$\left\| u(t)\right\| ^{2} \leq \frac{1}{\lambda_1}\left\| u(t)\right\|_{2} ^{2} \leq \frac{1}{ \lambda_1}\E(t) = C_1\E(t)$$
we obtain the estimate
$$\L(t) \geq \E(t) - 2\left\langle f\,,\,u(t)\right\rangle \geq \E(t) - \frac{1}{\varepsilon} \left\| f\right\| ^{2} - \varepsilon\left\| u(t)\right\| ^{2} \geq
(1 - \varepsilon C_1)\E(t) - \frac{1}{\varepsilon} \left\|
f\right\| ^{2}.$$ Finally, fixing $\varepsilon < \frac{1}{C_1}$,
we have
$$\E(t)\leq \frac{1}{1 - \varepsilon C_1}(\L(0) +
 \frac{1}{\varepsilon} \left\| f\right\| ^{2}) \leq \frac{1}{1 - \varepsilon C_1}\left(C_0(R, \|f\|) +
  \frac{1}{\varepsilon} \left\| f\right\| ^{2}\right) = C.$$
\end{proof}

Moreover, defined the following functional \label{PHI_FUNZIONALE}
$$\Phi (t)\,=\,E(t)+\varepsilon \left\langle
\partial_t u\,,\,u\right\rangle,
$$
where $E(t)\,=\,\E(t)+\frac{1}{2}\left( \beta +\left\|
u(t)\right\| _{1}^{2}\right) ^{2}+k\,\left\| u(t)\right\| ^{2}$,
we can prove
\begin{lemma}
\label{lemma_34} For any given $z \in \mathcal{H}_0$ and for any
$t>0$ and $\beta \in \mathbb{R}$, when $\varepsilon$ is small
enough there exist three positive constants, $m_0 $, $m_1 $  and
$m_2$, independent of $t$ such that
$$m_{0}\,\mathcal{E}(t)\leq \Phi(t)\leq m_{1}\,\mathcal{E}(t)\,+\,m_2.$$
\end{lemma}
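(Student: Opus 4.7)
The plan is to prove the two inequalities separately, splitting the cross term $\varepsilon\langle \pt u,u\rangle$ via Cauchy--Schwarz and the generalized Poincar\'e inequality \eqref{POINCARE}, and controlling the quartic term $(\beta+\|u\|_1^2)^2$ with the uniform energy bound supplied by Lemma \ref{lemma nuovo}.

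For the lower bound I would write
$$\Phi(t)\,\geq\,\E(t)+\tfrac{1}{2}\bigl(\beta+\|u(t)\|_1^2\bigr)^2+k\|u(t)\|^2-\varepsilon\,|\langle \pt u(t),u(t)\rangle|.$$
The last two kept terms are nonnegative, so they can be dropped. By Cauchy--Schwarz and \eqref{POINCARE},
$$\varepsilon\,|\langle \pt u,u\rangle|\,\leq\,\tfrac{\varepsilon}{2}\|\pt u\|^2+\tfrac{\varepsilon}{2}\|u\|^2\,\leq\,\tfrac{\varepsilon}{2}\|\pt u\|^2+\tfrac{\varepsilon}{2\lambda_1}\|u\|_2^2,$$
which yields $\Phi(t)\geq\bigl(1-\tfrac{\varepsilon}{2}(1+\tfrac{1}{\lambda_1})\bigr)\E(t)$. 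Choosing $\varepsilon$ small enough that $m_0:=1-\tfrac{\varepsilon}{2}(1+\tfrac{1}{\lambda_1})>0$ gives the desired lower bound.

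For the upper bound I would first expand $(\beta+\|u\|_1^2)^2\leq 2\beta^2+2\|u\|_1^4$. By Lemma \ref{lemma nuovo}, $\E(t)\leq C=C(R,\|f\|)$, and \eqref{POINCARE} gives $\|u(t)\|_1^2\leq\tfrac{1}{\sqrt{\lambda_1}}\|u(t)\|_2^2\leq\tfrac{C}{\sqrt{\lambda_1}}$. Hence
$$\|u(t)\|_1^4\,\leq\,\tfrac{C}{\sqrt{\lambda_1}}\,\|u(t)\|_1^2\,\leq\,\tfrac{C}{\lambda_1}\,\|u(t)\|_2^2\,\leq\,\tfrac{C}{\lambda_1}\,\E(t),$$
i.e.\ the quartic term becomes \emph{linear} in $\E(t)$. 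Combining this with $k\|u\|^2\leq \tfrac{k}{\lambda_1}\E(t)$ and the same Cauchy--Schwarz estimate for the cross term, all multiples of $\E(t)$ can be collected into a single constant $m_1$, while $\beta^2$ gives the additive constant $m_2$.

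The only genuine obstacle is the quartic nonlinearity: without an a priori $L^\infty$-in-time bound, $(\beta+\|u\|_1^2)^2$ could dominate $\E(t)$ and the upper inequality would fail. Invoking Lemma \ref{lemma nuovo} is precisely what makes this bootstrap work, which is why the statement must be read for trajectories issuing from a bounded set $\{\|z\|_{\H_0}\leq R\}$, with $m_1,m_2$ allowed to depend on $R$ and $\|f\|$ but not on $t$.
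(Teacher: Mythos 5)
Your argument is correct and follows essentially the same route as the paper: Young's inequality plus the Poincar\'e inequality \eqref{POINCARE} for the cross term $\varepsilon\langle\partial_t u,u\rangle$, and the uniform energy bound of Lemma \ref{lemma nuovo} to tame the quartic term in the upper estimate. The only (harmless) deviations are that you absorb $\frac{\varepsilon}{2}\|u\|^2$ into $\|u\|_2^2$ rather than into $k\|u\|^2$ --- which in fact makes your lower bound valid also for $k=0$, whereas the paper's choice $m_0=\min\{1-\varepsilon/2,\,1-\varepsilon/(2k)\}$ needs $k>0$ --- and that you split the quartic term into a multiple of $\E(t)$ plus a constant instead of bounding it wholesale by $\frac{1}{2}\bar{C}^2$ as in \eqref{stima2}.
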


\begin{proof}
In order to prove the lower inequality we must observe that, by
Young inequality
$$
\left| \left\langle \partial_t u\,,\,u\right\rangle \right| \geq
-\frac{1}{2}\,\left\| \partial_t u\right\| ^{2}-\frac{1}{2}\left\|
u\right\| ^{2} \ ;
$$
hence, we obtain
\begin{equation} \nonumber
\begin{split}
\Phi (t)\, \geq &
\,\|u(t)\|^2_2+\left(1-\frac{\varepsilon}{2}\right)\|\pt
u(t)\|^2+\frac{1}{2}\left( \beta +\left\| u(t)\right\|
_{1}^{2}\right) ^{2}+
\\
& +\left(k\,-\frac{\varepsilon}{2}\right)\,\left\| u(t)\right\|
^{2} +\|\eta^t\|^2_{0,\mu}\ .
\end{split}
\end{equation}
If we choose $\varepsilon$ small enough to satisfy $\varepsilon<2$
and $\varepsilon<2k$, then we have
\begin{equation} \label{stima}
\Phi (t)\, \geq m_0 \, E(t) \geq m_0 \, \E(t),
\end{equation}
where $m_0\,=\,\min\{1- \frac{\varepsilon}{2}, 1-
\frac{\varepsilon}{2k}\}$.

The upper inequality can be obtained using the definition of
$\Phi$ and applying the estimate
\begin{equation} \label{aggiunta}
\left| \left\langle \partial_t u\,,\,u\right\rangle \right| \leq
\frac{1}{2}\,\left\| \partial_t u\right\|
^{2}+\frac{1}{2\lambda_1}\left\| u\right\|_2^{2} \ .
\end{equation}
First, we can write
\begin{equation} \nonumber
\begin{split}
\Phi (t) \leq &
 \left[ 1+\frac{1}{ \,\lambda _{1}}\left(
k+\frac{\varepsilon }{2}\right) \right] \left\| u(t)\right\|
_{2}^{2}+\left( \frac{\varepsilon }{2}+1\right) \left\|
\partial_t u(t)\right\| ^{2}+
\\
& + \frac{1}{2}\left( \beta +\left\| u(t)\right\| _{1}^{2}\right)
^{2}+\|\eta^t\|^2_{0,\mu}.
\end{split}
\end{equation}
Then, by (\ref{POINCARE}) and Lemma \ref{lemma nuovo} we infer
\begin{equation} \label{stima2}
\left( \beta +\left\|u\right\|_1 ^{2}\right)  \leq \left| \beta
\right| +\frac{1}{ \sqrt{\lambda _{1}} }\,C = \bar {C},
\end{equation}
so that we finally obtain
\begin{equation} \nonumber
\begin{split}
\Phi (t) &\leq \left[ 2+\frac{1}{ \,\lambda _{1}}\left(
k+\frac{\varepsilon }{2}\right) +\frac{\varepsilon }{2}\right]
\E(t) +\frac{1}{2}\bar {C}^2 \, = \, m_1\,\E(t) + m_2 \ .
\end{split}
\end{equation}
\end{proof}

\noindent As a byproduct, we deduce the existence of a bounded
absorbing set $\B_0$, chosen to be the ball of $\H_0$ centered at
zero of radius $R_0=1+\sup\big\{\|z\|_{\H_0}:\L(z)\leq K\big\}$,
where $K=1+\sup_{z_0\in{\mathcal S}}\L(z_0)$. Note that $R_0$ can
be explicitly calculated in terms of the structural quantities of
our system.

%%%%%%%%%%%%%%%%%%%%%%%%%%%%%%%%%%%%%%%%%%%%%%%%%%%%%

\section {Exponential stability}

\noindent Recalling Theorem \ref{TH-stat.solut}, the set
$\mathcal{S}_0$ of stationary solutions reduces to a singleton
when
\begin{equation}
\label{beta} \beta \geq -\beta_c(k)= -\min_{n\in\N}\mu_n(k),\qquad
\mu_n(k)= \sqrt{\lambda_n}\left[1+\frac{k}{\lambda_n}\right], \ \
\lambda_n= n^4\pi^4.
\end{equation}
It is worth noting that $\beta_c(k)$ is a piecewise-linear
function of $k$, in that $\beta_c(k)=\mu_1(k)$ when
$0<k<\sqrt{\lambda_{1}} \sqrt{\lambda_{2}}=4\pi^4$, and in general
$$\beta_c(k)=\mu_n(k)\quad  \hbox{when}\quad  \sqrt{\lambda_{n-1}}\sqrt{\lambda_{n}}<k< \sqrt{\lambda_{n}}\sqrt{\lambda_{n+1}}.$$
Unlike the case $k=0$, the energy $\E(t)$ does not decay
exponentially in the whole domain of the $(\beta,k)$ plain where
\eqref{beta} is satisfied, but in a region which is strictly
included in it.

More precisely, let
\begin{equation} \nonumber
\bar{\beta}(k)= \left\{
\begin{array}{ll}
\beta_{c}(k), \quad \quad \quad \qquad 0<k\leq \lambda_1,
\\[1em]
2\sqrt{k}, \quad \quad \quad \qquad k>\lambda_1,
\end{array}
\right.
\end{equation}
the following result holds
\begin{theorem}
\label{exp-stab} When $f\,=0$, the solutions to \eqref {FIRST}
decay exponentially, i.e.
$$
\E(t)\,\leq\,c_0\,\E(0)\,e^{- c t}
$$
with $c_0$ and $c$ suitable positive constants, if and only if
$\beta\,>-\bar{\beta}(k)$.
\end{theorem}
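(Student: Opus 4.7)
The plan is to prove the ``if'' direction by constructing an auxiliary functional $\Lambda(t)$ equivalent to the energy $\E(t)$ and satisfying the differential inequality $\frac{d}{dt}\Lambda(t)\leq -c\,\Lambda(t)$ for some $c>0$; exponential decay of $\E(t)$ then follows by Gronwall. The Lyapunov functional $\L$ of Proposition~\ref{LYAP}, shifted by the additive constant $\beta^{2}/2$ so as to vanish at the straight equilibrium when $f=0$, already provides the weak dissipation $\frac{d}{dt}\L\leq -\delta\|\eta^{t}\|^{2}_{0,\mu}$, but on its own it controls only the memory variable. I would perturb $\L$ by two multipliers: the ``kinetic'' correction $\varepsilon_{1}\langle\pt u,u\rangle$ already used in Lemma~\ref{lemma_34}, and a ``memory'' correction of the form $\varepsilon_{2}\int_{0}^{\infty}\mu(s)\langle\eta(s),\pt u\rangle\,ds$ whose role, in standard fashion, is to synthesise a dissipation proportional to $-\|\pt u\|^{2}$.

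The derivative of $\Lambda$ is computed from~\eqref{BASE}, the identity $\pt\eta=T\eta+\pt u$, and integration by parts in $s$ (which yields $-\J(\eta)$, in turn bounded by $-\delta\|\eta\|_{0,\mu}^{2}$ via~\eqref{J}). All cross terms between $u$, $\pt u$ and $\eta$ are absorbed via Young's inequality together with the Cauchy--Schwarz estimate $\int_{0}^{\infty}\mu(s)\|\eta(s)\|_{2}\,ds\leq\sqrt{\kappa}\,\|\eta\|_{0,\mu}$. For $\varepsilon_{1},\varepsilon_{2}$ small enough (depending only on the structural constants $\delta,\kappa,\lambda_{1}$), one should reach
\[
\frac{d}{dt}\Lambda(t)\leq -c_{1}\|\eta^{t}\|_{0,\mu}^{2}-c_{2}\|\pt u\|^{2}-c_{3}\bigl[\|u\|_{2}^{2}+\beta\|u\|_{1}^{2}+k\|u\|^{2}\bigr]-c_{3}\|u\|_{1}^{4}.
\]

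The heart of the matter is the coercivity of the quadratic form $Q(u):=\|u\|_{2}^{2}+\beta\|u\|_{1}^{2}+k\|u\|^{2}$, from which the threshold $\bar\beta(k)$ will emerge. Diagonalising in the eigenbasis of $A$, one has $Q(u)=\sum_{n}a_{n}^{2}\sqrt{\lambda_{n}}(\beta+\mu_{n}(k))$, so that $Q(u)\geq c(\|u\|_{2}^{2}+\|u\|^{2})$ is equivalent to $-\beta\leq (1-c)\sqrt{\lambda_{n}}+k/\sqrt{\lambda_{n}}$ for every $n\in\N$. The function $x\mapsto (1-c)x+k/x$ is monotone increasing on $[\sqrt{\lambda_{1}},\infty)$ when $k\leq(1-c)\lambda_{1}$, so for $k\leq\lambda_{1}$ its minimum over $n$ is attained at $n=1$ and converges, as $c\to 0^{+}$, precisely to $\mu_{1}(k)=\beta_{c}(k)$. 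When $k>\lambda_{1}$ its continuous optimum equals $2\sqrt{k(1-c)}\to 2\sqrt{k}$, recovering the second branch $\bar\beta(k)=2\sqrt{k}$. The Poincar\'e inequality~\eqref{POINCARE} then upgrades $-\|u\|_{2}^{2}$ to simultaneous control of $\|u\|^{2}$, and the equivalence $\Lambda\simeq\E$ (in the spirit of Lemma~\ref{lemma_34}, up to the constant shift) converts $\frac{d}{dt}\Lambda\leq -c\,\E(t)$ into $\frac{d}{dt}\Lambda\leq -c'\Lambda(t)$.

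The main obstacle is twofold. First, recovering dissipation on $\|\pt u\|^{2}$: the memory term alone cannot produce it, and the $\varepsilon_{2}$ multiplier is what makes the argument work, with its smallness constrained on both sides by the equivalence requirements. Second, and more delicate, is the ``only if'' direction: for $k>\lambda_{1}$ and $\beta$ in the gap $-\beta_{c}(k)<\beta<-2\sqrt{k}$ the null solution is still unique by Theorem~\ref{TH-stat.solut}, yet a uniform exponential rate cannot hold. The plan for that part is to linearise around $u=0$, project onto the Fourier mode $e_{n}$ whose index makes $(1-c)\sqrt{\lambda_{n}}+k/\sqrt{\lambda_{n}}$ closest to $-\beta$, and analyse the resulting scalar Volterra integro-differential equation; the spectral abscissa of its characteristic equation tends to $0$ as $\beta$ approaches $-2\sqrt{k}$ from above, precluding a uniform decay rate and reflecting the intrinsic weakness of the purely memory-type dissipation.
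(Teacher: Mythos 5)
Your plan for the ``if'' direction has the same skeleton as the paper's (which is entirely deferred to Lemma~\ref{lemma}, quoted from \cite{BOV}, plus ``the same strategy'' of \cite{BGV1,BOV}): a perturbation of the energy by the multipliers $\langle\pt u,u\rangle$ and $\int_0^\infty\mu(s)\langle\eta(s),\pt u\rangle\,ds$, with everything reduced to the coercivity of $\langle Lu,u\rangle=\|u\|_2^2+\beta\|u\|_1^2+k\|u\|^2$. The problem is in how you extract the threshold $\bar\beta(k)$ from that quadratic form. Diagonalising, coercivity requires $-\beta\leq(1-c)\sqrt{\lambda_n}+ (k-c)/\sqrt{\lambda_n}$ for every $n\in\N$, and the minimum is over the \emph{discrete} set $\{\sqrt{\lambda_n}\}=\{n^2\pi^2\}$, not over $x>0$. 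Since $\sqrt{k/(1-c)}$ is generically not of the form $n^2\pi^2$, the discrete minimum is strictly larger than the continuous optimum $2\sqrt{k(1-c)}$, and as $c\to0^+$ it converges to $\min_n\mu_n(k)=\beta_c(k)$ for \emph{every} $k$, not only for $k\leq\lambda_1$. So your spectral route, carried out correctly, yields coercivity (and hence, via your own multiplier scheme, exponential decay) on the whole range $\beta>-\beta_c(k)$; it does not ``recover the second branch'' $2\sqrt{k}$. The branch $2\sqrt{k}$ in the paper does not come from the spectrum of $A$: it comes from the spectrum-blind interpolation estimate $\beta\|u\|_1^2\geq-\tfrac{|\beta|}{2}\bigl(\epsilon\|u\|_2^2+\epsilon^{-1}\|u\|^2\bigr)$ with $\epsilon=|\beta|/(2k)$, giving $\langle Lu,u\rangle\geq\bigl(1-\beta^2/(4k)\bigr)\|u\|_2^2$, which is precisely the function $\nu(\beta,k)$ of Lemma~\ref{lemma}. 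If you want to land on the stated threshold you must use that route; as written, your argument proves a different (larger) region of decay than the theorem asserts.

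This same discrepancy sinks your plan for the ``only if'' direction. For fixed $k>\lambda_1$ and $\beta$ in the gap $-\beta_c(k)<\beta<-2\sqrt{k}$, every modal stiffness $\lambda_n+\beta\sqrt{\lambda_n}+k=\sqrt{\lambda_n}\,(\beta+\mu_n(k))$ is bounded below by a positive constant, while the memory damping in mode $n$ scales like $\lambda_n$; the spectral abscissa of the modal characteristic equation therefore degenerates only as $\beta\downarrow-\mu_n(k)$ for some $n$, i.e.\ as $\beta\downarrow-\beta_c(k)$, never as $\beta\downarrow-2\sqrt{k}$. So the mechanism you invoke (a mode whose decay rate tends to zero at $\beta=-2\sqrt{k}$) does not exist, and the linearisation/projection plan cannot produce the claimed obstruction. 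Note that the paper itself does not prove the ``only if'' half of the theorem either: it infers it from the ``only if'' in Lemma~\ref{lemma}, which concerns the sign of the particular lower bound $\nu(\beta,k)$, not the actual absence of exponential decay of solutions. You should either restrict to the ``if'' statement via the interpolation estimate, or explicitly flag that the converse requires an argument that neither your sketch nor the paper supplies.
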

Using the same strategy bolstered in \cite{BGV1,BOV}, the proof of
this theorem is a direct consequence of the following lemma:
\begin{lemma}{\rm{(see \cite{BOV})}}
\label{lemma} Let $\beta \in \mathbb{R}$, $k>0$ and
$$
Lu\,=Au\,+\,\beta A^{\frac{1}{2}}u\,+k\,u \ .
$$
There exists a real function $\nu\,=\,\nu(\beta,k)$ such that
$$\left\langle Lu\,,\,u\right\rangle \geq \nu \left\| u\right\| _{2}^{2},$$
where $\nu(\beta,k)>0$ if and only if $\beta >-\bar{\beta}(k)$.
\end{lemma}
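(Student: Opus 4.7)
The identity $\langle Lu, u \rangle = \|u\|_2^2 + \beta \|u\|_1^2 + k \|u\|^2$, obtained directly from the definition $Lu = Au + \beta A^{1/2}u + ku$ together with $\langle Au, u\rangle = \|u\|_2^2$ and $\langle A^{1/2}u, u\rangle = \|u\|_1^2$, reduces the lemma to a quadratic-form estimate. When $\beta \geq 0$ the bound $\langle Lu, u \rangle \geq \|u\|_2^2$ is immediate, so I would take $\nu = 1$; the claimed equivalence then holds automatically because $\beta \geq 0 > -\bar\beta(k)$. All the work therefore concerns $\beta < 0$.

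For $\beta < 0$, the plan is to combine the interpolation estimate $\|u\|_1^2 \leq \|u\|_2 \|u\|$ (Cauchy--Schwarz in the spectral decomposition) with Young's inequality: for any $\alpha > 0$,
$$|\beta|\,\|u\|_1^2 \leq \alpha\,\|u\|_2^2 + \frac{\beta^2}{4\alpha}\,\|u\|^2,$$
which plugged into the identity yields
$$\langle Lu, u \rangle \geq (1 - \alpha)\|u\|_2^2 + \Bigl(k - \frac{\beta^2}{4\alpha}\Bigr)\|u\|^2.$$
Two sub-cases then arise. If $\alpha \geq \beta^2/(4k)$, the $\|u\|^2$-coefficient is nonnegative and can be dropped; saturating the constraint with $\alpha = \beta^2/(4k)$ delivers the bound $\nu_2(\beta,k) = 1 - \beta^2/(4k)$, which is positive iff $|\beta| < 2\sqrt{k}$. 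If instead $\alpha < \beta^2/(4k)$, the now-negative $\|u\|^2$-coefficient is absorbed via Poincar\'e $\|u\|^2 \leq \|u\|_2^2/\lambda_1$, producing
$$\nu = 1 + \frac{k}{\lambda_1} - \alpha - \frac{\beta^2}{4\alpha\lambda_1}.$$
The one-dimensional minimum of the $\alpha$-dependent part at $\alpha = |\beta|/(2\sqrt{\lambda_1})$ yields $\nu_1(\beta,k) = 1 + k/\lambda_1 - |\beta|/\sqrt{\lambda_1}$, valid whenever $|\beta| > 2k/\sqrt{\lambda_1}$ and positive iff $|\beta| < \mu_1(k) = \sqrt{\lambda_1} + k/\sqrt{\lambda_1}$.

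Finally I would define $\nu(\beta,k)$ by gluing the two bounds---$\nu_2$ for $|\beta| \leq 2k/\sqrt{\lambda_1}$ and $\nu_1$ for $|\beta| > 2k/\sqrt{\lambda_1}$, the two formulas matching at the common value $1 - k/\lambda_1$ on the transition---and verify the equivalence by comparing the relevant thresholds above and below $k = \lambda_1$. For $k \leq \lambda_1$ the inequalities $2k/\sqrt{\lambda_1} \leq 2\sqrt{k} \leq \mu_1(k)$ hold, so $\nu_2$ stays positive throughout its regime and $\nu_1$ carries positivity up to the sharp value $|\beta| = \mu_1(k) = \beta_c(k) = \bar\beta(k)$. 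For $k > \lambda_1$ one has instead $\mu_1(k) < 2k/\sqrt{\lambda_1}$, so $\nu_1$ is nonpositive throughout its validity range and only $\nu_2$ contributes, producing the threshold $|\beta| = 2\sqrt{k} = \bar\beta(k)$. The main obstacle is precisely recognizing and aligning this piecewise structure: neither estimate alone yields the correct critical value on both sides of the breakeven $k = \lambda_1$, at which the two expressions $\mu_1(k)$ and $2\sqrt{k}$ collapse to the common value $2\pi^2$; only the combined definition reproduces the function $\bar\beta(k)$ exactly, and the sign of $\nu$ must be checked in each of the four sub-regimes of the $(\beta,k)$-plane to complete the "iff".
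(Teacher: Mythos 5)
The paper does not actually prove this lemma --- it is quoted verbatim from the reference \cite{BOV} --- so there is no in-paper argument to compare against; judged on its own, your proof is correct and complete, and it follows the standard route used in that reference (spectral identity for $\langle Lu,u\rangle$, interpolation $\|u\|_1^2\leq\|u\|\,\|u\|_2$, Young with a free parameter $\alpha$, and Poincar\'e to absorb the negative $\|u\|^2$-term). I checked the key numerical facts: the two branches do agree at the junction $|\beta|=2k/\sqrt{\lambda_1}$ with common value $1-k/\lambda_1$, the orderings $2k/\sqrt{\lambda_1}\leq 2\sqrt k\leq\mu_1(k)$ for $k\leq\lambda_1$ and $\mu_1(k)<2k/\sqrt{\lambda_1}$ for $k>\lambda_1$ hold, and the glued $\nu$ is positive exactly when $\beta>-\bar\beta(k)$, which is all the lemma (as literally stated, about a particular function $\nu$, not the optimal constant) requires.
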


\begin{remark}
We stress that Theorem \ref{exp-stab} holds even if $k=0$. In this
case however we have $\bar{\beta}(0)= - {\beta_{c}}(0)= -
\sqrt{\lambda_{1}}$. Then the null solution is exponentially
stable, if unique.
\end{remark}

%%%%%%%%%%%%%%%%%%%%%%%%%%%%%%%%%%%%%%%%%%%%%%%%%%%%%

\section{The Global Attractor}

\noindent Now we state the existence of a global attractor for
$S(t)$, for any $\beta \in \mathbb{R}$ and $k \geq 0$. We recall
that the global attractor $\A$ is the unique compact subset of
$\H_0$ which is at the same time, {\it fully invariant}, i.e.
$S(t)\A=\A,$ for every $t\geq 0$ and {\it attracting}, i.e.
$$\lim_{t\to\infty}\boldsymbol{\delta}(S(t)\B,\A)\to 0,$$
for every bounded set $\B\subset\H_0$, where $\boldsymbol{\delta}$
stands for the Hausdorff semidistance in $\H_0$ (see \cite{BV},
\cite{HAL}, \cite{TEM}).

We shall prove the following
\begin{theorem}
\label{MAIN} The semigroup $S(t)$ on $\H_0$ possesses a connected
global attractor $\A$ bounded in $\H_2$, whose third component is
included in $\D(T)$, bounded in $\M^1_0$ and pointwise bounded in
$H_4$. Moreover, $\A$ coincides with the unstable manifold of the
set ${\mathcal S}$ of the stationary points of $S(t)$, namely,
$$\A=
\left\{ \widetilde{z}\,\,(0):
\begin{array}{cc}
      &  \widetilde{z}\,\,\hbox{is a complete (bounded) trajectory of }S(t):
       \\
 & \underset{t\rightarrow
\infty }{\lim }\left\| \widetilde{z}(-t)-\,S\right\| _{\H_{0}}=\,0
\end{array}
\right\} .
$$
\end{theorem}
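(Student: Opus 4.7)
The overall strategy is to exploit the gradient-system structure established in Proposition~\ref{LYAP} together with an asymptotic smoothing argument obtained by a suitable decomposition of the semigroup, in the spirit of~\cite{GPV}. By the standard theory of dissipative dynamical systems (see e.g.~\cite{BV,HAL,TEM}), the existence of a connected global attractor $\A$ coinciding with the unstable manifold of ${\mathcal S}$ will follow from three ingredients: a bounded absorbing set, a continuous Lyapunov functional with bounded stationary set on each of its sublevel sets, and asymptotic smoothness of $S(t)$ on bounded subsets of $\H_0$. The absorbing ball $\B_0$ has already been produced via Lemma~\ref{lemma nuovo}; the Lyapunov functional is Proposition~\ref{LYAP}; and the boundedness of ${\mathcal S}$ on sublevel sets of $\L$ follows from the explicit description~\eqref{soluz.tutte}--\eqref{staticresponse}. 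The bulk of the proof therefore consists in establishing asymptotic smoothness and the higher regularity of $\A$.

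For the smoothing step I would split $S(t)z=L(t)z+K(t)z$, where $L(t)z$ solves the linear memory problem obtained from~\eqref{BASE} by dropping the nonlinear term and the forcing, starting from the full initial datum $z$, while $K(t)z=(w(t),\pt w(t),\xi^t)$ solves the complementary equation with source $f-(\beta+\|u\|_1^2)A^{1/2}u$ and zero initial data. The linear component is expected to satisfy $\|L(t)z\|_{\H_0}\leq C e^{-\omega t}\|z\|_{\H_0}$ for some $\omega>0$, which follows by combining the auxiliary multiplier $\langle\pt u,u\rangle$ of Lemma~\ref{lemma_34} with the memory dissipation~\eqref{J} to produce a differential inequality of the form $\tfrac{d}{dt}\Phi_L+c\,\Phi_L\le 0$ for a suitable modified linear energy $\Phi_L$.

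The decisive step is then to show that $K(t)z$ remains bounded in the higher-regularity space $\H_2$ for every $t\ge 0$ and $z\in\B_0$. I would work with the $\H_2$-energy $\|w(t)\|_4^2+\|\pt w(t)\|_2^2+\|\xi^t\|_{2,\mu}^2$, multiplied and modified by an auxiliary term analogous to the one introduced in Lemma~\ref{lemma_34}; the cubic source is then controlled via the a priori bound of Lemma~\ref{lemma nuovo} together with the one-dimensional algebra embedding $H_2\hookrightarrow L^\infty$ (recall $A^{1/2}=-\partial_{xx}$). The resulting uniform bound on $\|K(t)z\|_{\H_2}$, combined with the compact embedding $\H_2\Subset\H_0$, yields the required asymptotic compactness of $S(t)$ on $\B_0$. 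The regularity claims on the third component of $\A$ then follow directly: along a complete bounded trajectory one has $\xi^t(s)=w(t)-w(t-s)$, so that $w\in L^\infty(\R;H_4)$ and $\pt w\in L^\infty(\R;H_2)$ yield, respectively, pointwise-in-$s$ boundedness of $\xi^t$ in $H_4$ and, through $\partial_s\xi^t(s)=\pt w(t-s)$, the bound $\xi^t\in\M^1_0\cap\D(T)$.

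The main obstacle is precisely this $\H_2$-bound on $K(t)z$: the cubic term $\|u\|_1^2\,A^{1/2}u$ must be estimated at the higher regularity level exploiting only the weak dissipation supplied by the memory kernel via~\eqref{J}, with no mechanical damping to rely on. Constructing the right modified energy at the $\H_2$ level---strong enough to absorb the nonlinearity yet coercive enough to close a Gr\"onwall-type inequality---is the delicate point, and is where the $\Phi$-multiplier of Lemma~\ref{lemma_34} must be carefully re-engineered. Once asymptotic smoothness is in place, the characterisation of $\A$ as the unstable manifold of ${\mathcal S}$ and its connectedness are standard consequences of the gradient-system structure, completing the proof.
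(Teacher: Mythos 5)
Your overall architecture (Lyapunov functional from Proposition~\ref{LYAP}, bounded set of equilibria, splitting $S(t)=L(t)+K(t)$ with $L$ decaying and $K$ smoothing, then invoking a Conti--Pata type abstract result) is exactly the paper's, and your treatment of the regularity of the third component of $\A$ via $\xi^t(s)=w(t)-w(t-s)$ and $\partial_s\xi^t(s)=\pt w(t-s)$ is fine. The gap is in the decomposition itself. You move the \emph{entire} term $\big(\beta+\|u\|_1^2\big)A^{1/2}u$ to the right-hand side of the $K$-system as a source. Along a trajectory with $z\in\B_0$ one only knows $u\in H_2$ and $\pt u\in H_0$, so this source is merely $H_0$-valued and, worse, its time derivative contains $A^{1/2}\pt u$, which lives only in $H_{-2}$. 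In the higher-order energy estimate for $K(t)z$ (multiplying the $w$-equation by $A\pt w$ to control $\|w\|_4^2+\|\pt w\|_2^2+\|\zeta\|_{2,\mu}^2$) the source contributes $\langle(\beta+\|u\|_1^2)A^{1/2}u,A\pt w\rangle$; integrating by parts in time leaves a term $\langle(\beta+\|u\|_1^2)A^{1/2}\pt u,Aw\rangle$ that cannot be estimated with the available regularity ($\pt u\in H_0$, $w\in H_4$ would require pairing $H_{-2}$ with $H_0$, or equivalently $w\in H_6$). So the step you yourself flag as ``the delicate point'' is not merely delicate under your splitting: it fails.

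The paper's decomposition \eqref{DECAY}--\eqref{CPT} is engineered precisely to avoid this. The nonlinear coefficient $\big(\beta+\|u\|_1^2\big)$ is kept on the left-hand side acting on the unknowns $v$ and $w$ of each subsystem (not on $u$), and only the lower-order coupling $\pm\alpha v$ is exchanged between the two systems, with $\gamma=\alpha+k$ large enough that \eqref{gamma} makes $A+(\beta+\|u\|_1^2)A^{1/2}+\gamma$ uniformly positive even for $\beta<0$. Then in the $K$-estimate the term $(\beta+\|u\|_1^2)\langle A^{1/2}w,A\pt w\rangle=\tfrac12(\beta+\|u\|_1^2)\tfrac{d}{dt}\|w\|_3^2$ integrates by parts in time with a controllable remainder (since $\tfrac{d}{dt}\|u\|_1^2=2\langle\pt u,A^{1/2}u\rangle$ is bounded and $\|w\|_3^2\le\|w\|_2\|w\|_4$), while the only genuine source is $f+\alpha v$ with $v$ exponentially decaying in $H_2$. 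You should replace your splitting by this one; with it, Lemmas~\ref{lemmaDECAY} and~\ref{lemmaCPT} go through as in \cite{GPV} (with $k$ absorbed into the coefficient $\alpha$), and the rest of your argument stands.
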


\noindent The set ${\mathcal S}$ of all stationary solutions
consists of the vectors of the form $(u,0,0)$, where $u$ is a
(weak) solution to the equation
$$Au+\big(\beta+\|u\|^2_1\big)A^{1/2}u+ku= f.$$
It is then apparent that ${\mathcal S}$ is bounded in $\H_0$. If
${\mathcal S}$ is finite, then
\begin{equation}
\label{CorFinite} \A= \Big\{\tilde z(0): \lim_{t\to
\infty}\|\tilde z(-t)-z_1\|_{\H_0} =\lim_{t\to \infty}\|\tilde
z(t)-z_2\|_{\H_0}=0\Big\},
\end{equation}
for some $z_1,z_2\in{\mathcal S}$.

\begin{remark}
When $f=0$, $k\geq 0$ and $\beta \geq -\beta_c(k)$, then
$\A=\mathcal{S}=\{(0,0,0)\}$. If $\beta < -\beta_c(k)$, then
$\mathcal{S}=\mathcal{S}_0$ may be finite or infinite, according
to Theorem \ref{TH-stat.solut}. In the former case,
\eqref{CorFinite} applies.
\end{remark}

The existence of a Lyapunov functional, along with the fact that
${\mathcal S}$ is a bounded set, allow us prove the existence of
the attractor exploiting a general result from \cite{CP}, tailored
for our particular case.

\begin{lemma}{\rm{(see \cite{CP})}}
\label{lemmaABSTRACT} Assume that, for every $R>0$, there exist a
positive function $\psi_R$ vanishing at infinity and a compact set
$\K_R\subset\H_0$ such that the semigroup $S(t)$ can be split into
the sum $L(t)+K(t)$, where the one-parameter operators $L(t)$ and
$K(t)$ fulfill
$$\|L(t)z\|_{\H_0}\leq \psi_R(t)\qquad\text{and}\qquad
K(t)z\in\K_R,$$ whenever $\|z\|_{\H_0}\leq R$ and $t\geq 0$. Then,
$S(t)$ possesses a connected global attractor $\A$, which consists
of the unstable manifold of the set ${\mathcal S}$.
\end{lemma}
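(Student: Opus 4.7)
The plan is to read the decomposition $S(t)=L(t)+K(t)$ as an asymptotic smoothing property on the bounded absorbing ball $\B_0\subset\H_0$ already produced by the Lyapunov functional, and then feed the output into the classical existence-and-structure theorem for attractors of gradient systems (see \cite{BV,HAL,TEM}). Concretely, I would fix $R=R_0$ so that $\B_0$ is contained in the ball of radius $R$, where the decomposition is available. Given any sequence $\{z_n\}\subset\B_0$ and times $t_n\to\infty$, the hypothesis yields
\begin{equation*}
S(t_n)z_n \;=\; L(t_n)z_n + K(t_n)z_n ,
\end{equation*}
with $\|L(t_n)z_n\|_{\H_0}\le \psi_{R_0}(t_n)\to 0$ and $K(t_n)z_n\in\K_{R_0}$ compact. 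A diagonal extraction therefore produces a convergent subsequence of $\{S(t_n)z_n\}$, which is exactly asymptotic compactness on $\B_0$.

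Combined with the bounded absorbing set, this suffices for the classical theory to deliver a global attractor $\A=\omega(\B_0)$. Connectedness follows because $S(t)$ is continuous on the connected space $\H_0$, so $\A$ is an intersection of the connected sets $\overline{\bigcup_{s\geq t}S(s)\B_0}$. The identification $\A=W^u(\mathcal{S})$ is the gradient-system part of the argument: Proposition~\ref{LYAP} provides a Lyapunov functional $\L$ strictly decreasing off $\mathcal{S}$, so along any complete bounded trajectory $\tilde z(t)\subset\A$ the map $t\mapsto\L(\tilde z(t))$ is monotone and bounded; hence both the $\omega$-limit and the $\alpha$-limit lie in $\mathcal{S}$, giving exactly the representation announced in Theorem~\ref{MAIN}.

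The main obstacle is the last step, specifically the backward-in-time analysis on $\A$. Forward invariance of $\A$ is automatic, but to legitimately speak of \emph{complete bounded} trajectories one must verify that pre-images inside $\A$ stay bounded and that their $\alpha$-limit sets accumulate on $\mathcal{S}$. This is where boundedness of $\mathcal{S}$ (noted just after the statement of Theorem~\ref{MAIN}) and the rigidity clause in the definition of Lyapunov functional — constancy of $\L$ along a trajectory forces it to sit in $\mathcal{S}$ — together close the loop. A secondary care point is that the smoothing estimate $\psi_R$ is only uniform on balls of radius $R$; by fixing $R=R_0$ from the outset, the vanishing function $\psi_{R_0}$ and the compact set $\K_{R_0}$ remain the same throughout, so no diagonal blow-up in $R$ threatens the extraction of convergent subsequences.
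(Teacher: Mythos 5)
The paper does not actually prove this lemma: it is imported from Conti and Pata \cite{CP}, and the surrounding text only verifies its hypotheses (via Lemmas \ref{lemmaDECAY} and \ref{lemmaCPT}) and the standing assumptions (Lyapunov functional, boundedness of ${\mathcal S}$). Your sketch is therefore reconstructing the argument of \cite{CP} rather than competing with one in the paper, and it follows the standard route: the decomposition gives asymptotic compactness, since $S(t_n)z_n$ lies within distance $\psi_R(t_n)\to 0$ of the fixed compact set $\K_R$ and hence subconverges (a single extraction suffices; no diagonal argument is needed); the Lyapunov functional of Proposition \ref{LYAP} together with the boundedness of ${\mathcal S}$ supplies the gradient structure; and Hale's theory then identifies $\A$ with the unstable manifold of ${\mathcal S}$, the backward-in-time step being legitimate because full invariance of the compact set $\A$ puts every point of $\A$ on a complete trajectory contained in $\A$, hence bounded, whose $\alpha$-limit set is invariant, carries a constant value of $\L$, and so lies in ${\mathcal S}$ by property (iii). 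This is the right proof.

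Two wrinkles deserve mention. First, the connectedness one-liner is not correct as stated: a decreasing intersection of closed connected sets need not be connected in an infinite-dimensional space, since the sets $\overline{\bigcup_{s\geq t}S(s)\B_0}$ are not compact. The standard repair uses that the compact set $\A$ attracts a connected bounded set containing it: if $\A$ split into two nonempty compacta separated by open sets, the connected sets $S(t)\B_0$ would contain points outside both neighbourhoods, and asymptotic compactness would force a limit point of $\A$ outside them, a contradiction. Second, there is a latent circularity in taking the absorbing ball $\B_0$ ``already produced by the Lyapunov functional'' as your starting point: the explicit radius $R_0$ quoted in the paper presupposes that trajectories eventually enter the sublevel set $\{\L\leq K\}$, which rests on convergence to ${\mathcal S}$, i.e.\ on the asymptotic compactness you are in the course of establishing. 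The clean order — and the one in \cite{CP} — is: uniform boundedness of orbits on balls (Lemma \ref{lemma nuovo}), asymptotic compactness on every ball of radius $R$ (your extraction, valid for each $R$ since the hypothesis is assumed for all $R$), then the Lyapunov functional and bounded ${\mathcal S}$ give point dissipativity and the attractor; the absorbing set is a consequence, not an input. Neither issue is fatal, but both should be fixed.
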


\noindent The proof of Theorem \ref{MAIN} will be carried out be
showing a suitable asymptotic compactness property of the
semigroup, obtained exploiting a particular decomposition of
$S(t)$ devised in \cite {GPV}.

By the interpolation inequality
$$\|u\|_1^2\leq \|u\|\|u\|_2,$$
it is clear that, provided that $\gamma>0$ is large enough,
\begin{equation}
\label{gamma} \frac12\|u\|_2^2\leq
\|u\|_2^2+\beta\|u\|_1^2+\gamma\|u\|^2\leq m\|u\|_2^2,
\end{equation}
for some $m=m(\beta,\gamma)\geq 1$. Now, choosing $\gamma= \alpha
+k$, where $k>0$ is a fixed value, we assume $\alpha$ large enough
so that \eqref{gamma} holds true. Let $R>0$ be fixed and
$\|z\|_{\H_0}\leq R$. Paralleling the procedure given in
\cite{GPV}, we decompose the solution $S(t)z$ into the sum
$$S(t)z=L(t)z+K(t)z,$$
where
$$L(t)z=(v(t),\pt v(t),\xi^t)\qquad\text{and}\qquad
K(t)z=(w(t),\pt w(t),\zeta^t)$$ solve the systems
\begin{equation}
\label{DECAY}
\begin{cases}
\ptt v+Av+\displaystyle\int_0^\infty \mu(s) A\xi(s)ds+
(\beta+\|u\|^2_1)A^{1/2}v+\alpha v+ kv= 0,\\
\pt \xi=T\xi+\pt v,\\
\noalign{\vskip1.5mm} (v(0),\pt v(0),\xi^0)=z,
\end{cases}
\end{equation}
and
\begin{equation}
\label{CPT}
\begin{cases}
\ptt w+Aw+\displaystyle\int_0^\infty \mu(s) A\zeta(s)ds+
(\beta+\|u\|^2_1)A^{1/2}w-\alpha v + kw= f,\\
\pt \zeta=T\zeta+\pt w,\\
\noalign{\vskip1.5mm} (w(0),\pt w(0),\zeta^0)=0.
\end{cases}
\end{equation}

\noindent Then, Theorem \ref{MAIN} is proved as a consequence of
the following lemmas.

\begin{lemma}
\label{lemmaDECAY} There is $\omega=\omega(R)>0$ such that
$$\|L(t)z\|_{\H_0}\leq Ce^{-\omega t}.$$
\end{lemma}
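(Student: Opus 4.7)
The plan is to construct a Lyapunov-type functional for \eqref{DECAY} that is uniformly equivalent to $\|L(t)z\|_{\H_0}^2$ and satisfies a differential inequality of the form $\frac{d}{dt}\Phi_v\leq-\omega\,\Phi_v$, from which Gronwall will deliver the exponential estimate.

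First, set
$$E_v(t)=\|\pt v(t)\|^2+\|v(t)\|_2^2+\bigl(\beta+\|u(t)\|_1^2\bigr)\|v(t)\|_1^2+(\alpha+k)\|v(t)\|^2+\|\xi^t\|_{0,\mu}^2.$$
Since $\|z\|_{\H_0}\leq R$, Lemma \ref{lemma nuovo} yields the uniform bound $\|u(t)\|_2\leq C_R$, hence $\|u(t)\|_1^2$ is bounded too. Choosing $\alpha$ large enough that \eqref{gamma} applies with $\gamma=\alpha+k$ and using $\|u\|_1^2\geq 0$ for the lower bound, one checks that $E_v$ and $\|L(t)z\|_{\H_0}^2$ are equivalent with constants depending on $R$. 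Testing \eqref{DECAY}$_1$ with $\pt v$ and handling the memory term by integration by parts in $s$ (using $\xi(0)=0$ and the decay of $\mu$ at infinity) leads to the exact identity
$$\frac{d}{dt}E_v(t)=\frac{d}{dt}\|u(t)\|_1^2\cdot\|v(t)\|_1^2-\J(\xi^t).$$
By Lemma \ref{lemma nuovo} once more, $\bigl|\tfrac{d}{dt}\|u\|_1^2\bigr|=2|\langle A^{1/2}u,\pt u\rangle|\leq 2\|u\|_2\|\pt u\|\leq 2C_R$, so the first term is controlled by $2C_R\lambda_1^{-1/2}\|v\|_2^2$, and \eqref{J} makes the second at least $\delta\|\xi^t\|_{0,\mu}^2$. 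By itself this supplies dissipation on $\xi^t$ only, not on $v$ or $\pt v$.

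To import the missing dissipation, introduce the two auxiliary functionals
$$\Psi_1(t)=\langle\pt v,v\rangle,\qquad\Psi_2(t)=-\Bigl\langle\pt v,\int_0^\infty\mu(s)\xi^t(s)\,ds\Bigr\rangle,$$
each bounded in absolute value by a constant multiple of $E_v$ via Cauchy--Schwarz and \eqref{POINCARE}. A direct computation using \eqref{DECAY}, the relation $\pt\xi=-D\xi+\pt v$, a further integration by parts in $s$, and Young's inequality yields, in the style of \cite{GPV},
\begin{equation}\nonumber
\frac{d}{dt}\Psi_1\leq\|\pt v\|^2-\tfrac14\|v\|_2^2+c_1\|\xi^t\|_{0,\mu}^2,
\end{equation}
\begin{equation}\nonumber
\frac{d}{dt}\Psi_2\leq-\tfrac{3\kappa}{4}\|\pt v\|^2+c_2\|v\|_2^2+c_3\|\xi^t\|_{0,\mu}^2+c_4\,\J(\xi^t),
\end{equation}
with $R$-dependent constants whose splits can be tuned (in particular $c_2$ can be made as small as needed at the cost of enlarging $c_3$). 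Set $\Phi_v=E_v+\varepsilon_2\Psi_2+\varepsilon_1\Psi_1$ and choose first $\varepsilon_2$ small (so that $\varepsilon_2 c_4<\tfrac12$ and with $c_2$ tuned small), then $\varepsilon_1$ inside a non-empty window of the form $\bigl(8C_R\lambda_1^{-1/2}+4\varepsilon_2 c_2,\,\tfrac{3\varepsilon_2\kappa}{4}\bigr)$, keeping both parameters small enough that $\Phi_v$ remains equivalent to $E_v$. Summing the three contributions and invoking \eqref{J} once more to absorb the residual $\xi^t$-term, one obtains $\frac{d}{dt}\Phi_v\leq-\omega\Phi_v$ for some $\omega=\omega(R)>0$. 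Gronwall and the chain of equivalences then produce the required exponential decay of $\|L(t)z\|_{\H_0}^2$.

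The main technical obstacle is the non-autonomous coefficient $\beta+\|u(t)\|_1^2$ in \eqref{DECAY}$_1$: its time derivative produces the spurious cross term $\tfrac{d}{dt}\|u\|_1^2\cdot\|v\|_1^2$ in the energy identity. Its absorption into the $\|v\|_2^2$-dissipation provided by $\Psi_1$ is precisely why the a priori bounds on $\|u\|_2$ and $\|\pt u\|$ from Lemma \ref{lemma nuovo} must remain in force throughout the argument, and it is also responsible for the $R$-dependence of the decay rate $\omega$.
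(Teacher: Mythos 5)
Your overall architecture (perturbed energy $\Phi_v=E_v+\varepsilon_2\Psi_2+\varepsilon_1\Psi_1$, differential inequality, Gronwall) is exactly the one the paper invokes by reference to Lemma 5.2 of \cite{GPV}, and your energy identity $\frac{d}{dt}E_v=\frac{d}{dt}\|u\|_1^2\cdot\|v\|_1^2-\J(\xi^t)$ is correct. However, the step where you absorb the non-autonomous cross term is broken, and you have in fact written down the contradiction yourself: you bound $\bigl|\frac{d}{dt}\|u\|_1^2\bigr|\,\|v\|_1^2\leq 2C_R\lambda_1^{-1/2}\|v\|_2^2$ and then require $\varepsilon_1$ to lie in the window $\bigl(8C_R\lambda_1^{-1/2}+4\varepsilon_2c_2,\;\tfrac{3\varepsilon_2\kappa}{4}\bigr)$. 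The lower end of this window is a fixed constant determined by $R$ (and $C_R$ cannot be assumed small: it comes from Lemma \ref{lemma nuovo} and grows with $R$), while the upper end is a small multiple of $\varepsilon_2$, which you have already been forced to take small so that $\varepsilon_2c_4<\tfrac12$ and $\Phi_v\sim E_v$. The window is therefore empty for generic $R$, and no tuning of $c_2$ (which only affects the $\Psi_2$ estimate) can repair this. In short: the only dissipation available on $\|v\|_2^2$ has size $O(\varepsilon_1)$ with $\varepsilon_1$ necessarily small, whereas the spurious term has size $O(C_R)\|v\|_2^2$, so it cannot be absorbed the way you propose.

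The missing idea --- and the reason the paper insists that the coefficient $\alpha$ in \eqref{DECAY} must be chosen suitably large (depending on $k$, and in fact on $R$) --- is to split the bad term differently. Using the interpolation inequality $\|v\|_1^2\leq\|v\|\,\|v\|_2$ stated just before \eqref{gamma}, one gets
$$\Bigl|\frac{d}{dt}\|u\|_1^2\Bigr|\,\|v\|_1^2\leq 2C_R^2\,\|v\|\,\|v\|_2\leq \frac{\varepsilon_1}{4}\|v\|_2^2+\frac{4C_R^4}{\varepsilon_1}\|v\|^2,$$
and the large multiple of $\|v\|^2$ is then absorbed by the dissipation $-\varepsilon_1(\alpha+k)\|v\|^2$ that $\Psi_1$ produces (your computation of $\frac{d}{dt}\Psi_1$ drops this term, but it is there, coming from testing $\alpha v+kv$ against $v$), provided $\alpha$ is chosen large enough in terms of $C_R$ and $\varepsilon_1$. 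With that modification the smallness constraints on $\varepsilon_1,\varepsilon_2$ become compatible and the rest of your argument (the $\Psi_2$ estimate, the use of \eqref{J}, the equivalence $\Phi_v\sim E_v\sim\|L(t)z\|_{\H_0}^2$, and Gronwall) goes through. So the gap is localized but genuine: as written, the choice of parameters closing the differential inequality does not exist.
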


\noindent It shows the exponential decay of $L(t)z$ by means of a
dissipation integral (see Lemma 5.2 of \cite {GPV}).

\begin{lemma}
\label{lemmaCPT} The estimate
$$\|K(t)z\|_{\H_2}\leq C$$
holds for every $t\geq 0$.
\end{lemma}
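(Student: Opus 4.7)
The plan is to bootstrap the $\H_0$ bound on $K(t)z$---which follows from $\|S(t)z\|_{\H_0}\leq C$ (Lemma~\ref{lemma nuovo}) combined with Lemma~\ref{lemmaDECAY}---up to the higher-regularity space $\H_2$, adapting to the memory setting the multiplier strategy already employed at order zero in Lemma~\ref{lemma_34}. Concretely, I would test the first equation in \eqref{CPT} against $A\pt w$ in $H_0$, and the second against $\zeta$ in $\M_2$. The two kinetic contributions deliver $\tfrac12\frac{d}{dt}(\|\pt w\|_2^2+\|w\|_4^2)$; the memory cross term $\int_0^\infty\mu(s)\langle\zeta(s),\pt w\rangle_4\,ds$ is reabsorbed, via the $\zeta$-equation identity, as $\tfrac12\frac{d}{dt}\|\zeta^t\|_{2,\mu}^2$ plus the single dissipation furnished by the kernel,
$$-\frac{1}{2}\int_0^\infty\mu'(s)\|\zeta(s)\|_4^2\,ds \;\geq\; \frac{\delta}{2}\|\zeta^t\|_{2,\mu}^2,$$
obtained from \eqref{MU} and the boundary condition $\zeta(0)=0$.

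All remaining terms are organized into a higher-order functional
$$\Psi(t)=\|\pt w(t)\|_2^2+\|w(t)\|_4^2+\|\zeta^t\|_{2,\mu}^2 + \bigl(\beta+\|u(t)\|_1^2\bigr)\|w(t)\|_3^2 + k\|w(t)\|_2^2 - 2\langle f,Aw(t)\rangle,$$
enriched if necessary by a small term $\varepsilon\langle\pt w,Aw\rangle$, the one-scale-up analogue of the correction in Lemma~\ref{lemma_34}, to supply coercive friction on $\|\pt w\|_2^2+\|w\|_4^2$. Three checks are in order: (i) the nonlinear coupling $(\beta+\|u\|_1^2)\langle A^{1/2}w,A\pt w\rangle$ equals $\tfrac12\frac{d}{dt}[(\beta+\|u\|_1^2)\|w\|_3^2]$ minus a remainder proportional to $\frac{d}{dt}\|u\|_1^2 = 2\langle A^{1/2}u,\pt u\rangle$, which is uniformly bounded thanks to $\E(t)\leq C$; (ii) because $f$ is time-independent, $\langle f,A\pt w\rangle=\frac{d}{dt}\langle f,Aw\rangle$ is absorbed into $\Psi$, and $|\langle f,Aw\rangle|\leq \tfrac14\|w\|_4^2+\|f\|^2$ ensures the equivalence $\Psi(t)+C(\|f\|)\sim \|K(t)z\|_{\H_2}^2$; (iii) the coupling $-\alpha\langle v,A\pt w\rangle$ is handled by Young's inequality together with the exponential decay $\|v(t)\|_2\leq Ce^{-\omega t}$ from Lemma~\ref{lemmaDECAY}.

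Collecting these estimates yields a differential inequality of the form
$$\frac{d}{dt}\Psi(t)+c\,\Psi(t)\;\leq\; C_1 + C_2\,e^{-2\omega t},$$
valid for $\varepsilon$ small enough. Since $K(0)z=0$ forces $\Psi(0)=0$, a standard Gronwall argument delivers $\Psi(t)\leq C$ uniformly in $t\geq 0$, and the equivalence with $\|K(t)z\|_{\H_2}^2$ concludes the proof. The principal obstacle is the simultaneous calibration of $\varepsilon$ and of the various Young thresholds so that the friction obtained for $\|\pt w\|_2^2+\|w\|_4^2$ strictly dominates the non-dissipative cross terms produced by the time-varying coefficient $\|u(t)\|_1^2$ and by the coupling with $v$; these can be closed only because both the a~priori bound $\E(t)\leq C$ and the exponential decay of $L(t)z$ are already in hand.
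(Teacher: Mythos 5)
The paper itself does not write out this proof: it simply defers to Lemma~6.3 of \cite{GPV}, with the remark that the coefficient $\alpha$ in the functional $\Phi_0$ of \cite{GPV} must also account for $k$. Your attempt at a self-contained argument is therefore welcome, and several of its ingredients are right (the treatment of the memory cross term via $T$ and \eqref{MU}, the handling of the time-dependent coefficient through $\frac{d}{dt}\|u\|_1^2=2\l A^{1/2}u,\pt u\r$ and $\E(t)\leq C$, the use of the exponential decay of $v$). But there is a genuine gap at the heart of the estimate: your scheme produces no negative multiple of $\|\pt w\|_2^2$, so the claimed differential inequality $\frac{d}{dt}\Psi+c\Psi\leq C_1+C_2e^{-2\omega t}$ cannot be derived. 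Indeed, the energy identity obtained from the multipliers $A\pt w$ and $\zeta$ yields dissipation only on $\|\zeta^t\|_{2,\mu}^2$ (via $\J$ at order two), while your correction $\varepsilon\l\pt w,Aw\r$ differentiates to
$$
\varepsilon\l\ptt w,Aw\r+\varepsilon\|\pt w\|_2^2
=-\varepsilon\|w\|_4^2+\varepsilon\|\pt w\|_2^2+\dots,
$$
i.e.\ it gives friction on $\|w\|_4^2$ but a \emph{positive} $\varepsilon\|\pt w\|_2^2$ with nothing to absorb it; the Young-type leftovers from the couplings with $v$ and with the varying coefficient also dump further positive multiples of $\|\pt w\|_2^2$ on the right-hand side. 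At best one obtains $\frac{d}{dt}\Psi\leq C(1+\Psi)$, which Gronwall turns into exponential growth, not a uniform bound. This is precisely the difficulty of the ``very weak dissipation'' stressed in the introduction.

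The missing ingredient, which is the core of the argument in \cite{CP} and \cite{GPV}, is a third auxiliary functional coupling the velocity to the memory variable, of the form
$$
I(t)=-\int_0^\infty\mu(s)\l\pt w(t),\zeta^t(s)\r_2\,ds .
$$
Differentiating $I$ along \eqref{CPT} and using $\pt\zeta=T\zeta+\pt w$ produces the crucial term $-\kappa\|\pt w\|_2^2$, at the price of remainders controlled by $\|\zeta\|_{2,\mu}$, by $\J$ (after an integration by parts in $s$ using $\mu'\leq-\delta\mu$ and $\zeta(0)=0$), and by the other components already dominated. Only the combination of the energy functional, your $\varepsilon\l\pt w,Aw\r$, and a small multiple of $I$ yields negative definite terms in \emph{all three} components $\|w\|_4^2$, $\|\pt w\|_2^2$, $\|\zeta\|_{2,\mu}^2$, and hence the inequality $\frac{d}{dt}\Psi+c\Psi\leq C_1+C_2e^{-2\omega t}$ you need before invoking $\Psi(0)=0$ and Gronwall. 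With that functional added (and the two smallness parameters calibrated against $\kappa$, $\delta$ and the constant $\bar C$ of \eqref{stima2}), your argument closes and reproduces the proof the paper delegates to \cite{GPV}.
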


\noindent It shows the asymptotic smoothing property of ${K}(t)$
in a more regular space, for initial data bounded by $R$ (see
Lemma 6.3 of \cite {GPV}).

The proof of these two lemmas can be done following the guidelines
of \cite {GPV}, remembering that in definition of the coefficient
$\alpha$ in the functional $\Phi_0$ (see \cite{GPV} pag. 726) also
the elastic coefficient $k$ must be taken into account.

\end{document}